\newif\ifmtr
\newif\ifreleased
\newtheorem{thm}{Theorem}
\newtheorem{lem}{Lemma}
\newcommand{\cpsa}{\textsc{cpsa}}
\newcommand{\pvs}{\textsc{pvs}}
\newcommand{\acp}{\textsc{acp}}
\newcommand{\cn}[1]{\ensuremath{\operatorname{\mathsf{#1}}}}
\newcommand{\fn}[1]{\ensuremath{\operatorname{\mathit{#1}}}}
\newcommand{\srt}[1]{\ensuremath{\mathsf{#1}}}
\newcommand{\typ}{\mathbin:}
\newcommand{\sdom}{\fn{Dom}}
\newcommand{\seq}[1]{\ensuremath{\langle#1\rangle}}
\newcommand{\enc}[2]{\ensuremath{\{\!|#1|\!\}_{#2}}}
\newcommand{\invk}[1]{{#1}^{-1}}
\newcommand{\inbnd}{\mathord -}
\newcommand{\outbnd}{\mathord +}
\newcommand{\nat}{\ensuremath{\mathbb{N}}}
\newcommand{\all}[1]{\forall#1\mathpunct.}
\newcommand{\some}[1]{\exists#1\mathpunct.}
\newcommand{\funct}[1]{\lambda#1\mathpunct.}
\newcommand{\pow}[1]{\mathcal P(#1)}
\newcommand{\prefix}[2]{#1\mid#2}
\newcommand{\init}{\fn{init}}
\newcommand{\resp}{\fn{resp}}
\newcommand{\run}{\mathcal{R}}
\newcommand{\pt}{\fn{pt}}
\newcommand{\form}{\mathcal{K}}
\newcommand{\sent}{\mathcal{S}}
\newcommand{\interp}{\mathcal{I}}
\newcommand{\alg}[1]{\ensuremath{\mathfrak#1}}
\newcommand{\tr}{\ensuremath{\mathfrak C}}
\newcommand{\rl}{\fn{rl}}
\newcommand{\skel}{\mathsf{k}}
\newcommand{\insta}{\mathsf{i}}
\newcommand{\evt}{\fn{evt}}
\newcommand{\role}{\mathsf{r}}
\newcommand{\orig}{\mathcal{O}}
\newcommand{\tran}{\ensuremath{\tau}}
\newcommand{\pth}{\ensuremath{\pi}}
\newcommand{\type}{\ensuremath{\mathfrak T}}
\newcommand{\up}{\mathord\uparrow}
\newcommand{\down}{\mathord\downarrow}
\title{Proving Security Goals With \\ Shape Analysis Sentences}
\author{John D.\ Ramsdell}
\begin{document}
\ifmtr




\begin{titlepage}
\begin{trivlist}\sffamily\bfseries\large
\item
MTR130488\\[-1.2ex]
\hrule ~\\
{\mdseries MITRE TECHNICAL REPORT}\\[1cm]
\LARGE
Proving Security Goals With\\[1ex] Shape Analysis Sentences\\[2.5cm]
\large
September 2013\\
~\\
\mdseries
John D.~Ramsdell
\vfill
\normalsize
\bfseries
\begingroup\footnotesize
\begin{tabbing}
Sponsor: \phantom{spo} \= NSA/R2D \phantom{phantom} \=
Contract No.: \phantom{pro}\= W15P7T-13-C-F600 \\
Dept. No.: \>G063 \>Project No.: \>0713N6BZ-TD\\[1.2cm]
The views, opinions and/or findings contained in \>\>\>\phantom{space}
\ifreleased
\= Approved for Public Release\\
this report are those of The MITRE Corporation
\>\>\>\phantom{space} Case No. 13-3482\\
and should not be construed as an official\\
\else
\= This document was prepared for authorized\\
this report are those of The MITRE Corporation \>\>\>\>distribution only. It has not been approved\\
and should not be construed as an official\>\>\>\>for public release.\\
\fi
Government position, policy, or decision, unless\\
designated by other documentation.\\[4mm]
{\copyright} 2013 The MITRE Corporation. All Rights Reserved.
\end{tabbing}
\endgroup
~\\
\noindent
\includegraphics{mitrelogo-0.mps}\\
Center for Integrated Intelligence Systems\\
Bedford, Massachusetts
\end{trivlist}
\end{titlepage}

\noindent
Approved by:\\[1in]
\rule{3in}{.3mm}\\
Paul D. Rowe, 0713N6BZ-TD Project Leader

\clearpage

\else
\maketitle
\fi

\begin{abstract}
The paper that introduced shape analysis sentences presented a method
for extracting a sentence in first-order logic that completely
characterizes a run of {\cpsa}.  Logical deduction can then be used to
determine if a security goal is satisfied.

This paper presents a method for importing shape analysis sentences
into a proof assistant on top of a detailed theory of strand spaces.
The result is a semantically rich environment in which the validity of
a security goal can be determined using shape analysis sentences and the
foundation on which they are based.
\end{abstract}

\tableofcontents
\listoffigures

\clearpage

\section{Introduction}

A central problem in cryptographic protocol analysis is to determine
whether a formula that expresses a security goal about behaviors
compatible with a protocol is true.  Following~\cite{Guttman09}, a
security goal is a quantified implication:
\begin{equation}\label{eqn:security goal}
\all{\vec{x}}\Phi_0\supset\bigvee_{1\le i\le n}\some{\vec{y}_i}\Phi_i.
\end{equation}

The hypothesis~$\Phi_0$ is a conjunction of atomic formulas describing
regular (honest) behavior.  Each disjunct~$\Phi_i$ that makes up the
conclusion is also a conjunction of atomic formulas.  When~$\Phi_i$
describes desired behaviors of other regular participants, then the
formula is an \index{authentication goal}\emph{authentication} goal.
The goal says that each run of the protocol compatible with~$\Phi_0$
will include the regular behavior described by one of the disjuncts.
When $n=0$, the goal's conclusion is false.  In this case, if~$\Phi_0$
mentions an unwanted disclosure, Eq.~\ref{eqn:security goal} says the
disclosure cannot occur, thus a security goal with $n=0$ expresses a
\index{secrecy goal}\emph{secrecy} goal.

Guttman~\cite{Guttman09} presented a model-theoretic approach to
establishing security goals in the context of strand space theory.  In
that setting, a \index{skeleton}\emph{skeleton} describes regular
behaviors compatible with a protocol.  For skeleton~$k$ and
formula~$\Phi$, he defined $k,\alpha\models\Phi$ to mean that the
conjunction of atomic formulas that make up~$\Phi$ is satisfied in~$k$
with variable assignment~$\alpha$.

A \index{realized skeleton}\emph{realized} skeleton is one that
includes enough regular behavior to specify all the non-adversarial
part of an execution of the protocol.  In a realized skeleton, its
message transmissions combined with possible adversarial behavior
explain every message reception in the skeleton.

In strand space theory, a \index{homomorphism}\emph{homomorphism} is a
structure-preserving map~$\delta$ that shows how the behaviors in one
skeleton are reflected within another.  As skeletons serve as models,
homomorphisms preserve satisfaction for conjunctions of atomic
formulas.

The Cryptographic Protocol Shapes Analyzer ({\cpsa}) constructs
homomorphisms from a skeleton~$k_0$ to realized
skeletons~\cite{cpsa09}.  If {\cpsa} terminates, it generates a set of
realized skeletons~$k_i$ and a set of homomorphisms $\delta_i\typ
k_0\mapsto k_i$.  These realized skeletons are all the minimal,
essentially different skeletons that are homomorphic images of~$k_0$
and are called the \index{shapes}\emph{shapes} of the analysis.

Ramsdell~\cite{Ramsdell12} described {\cpsa}'s support for security
goals. {\cpsa} includes a tool that extracts a sentence that
characterizes a shape analysis.  This so called \index{shape analysis
  sentence}\emph{shape analysis sentence} is special in that it
encodes everything that can be learned from the shape analysis.

Given a shape analysis sentence, a security goal is achieved if the
goal can be deduced from the sentence.  {\cpsa} includes a Prolog
program that translates shape analysis sentences into
\index{Prover9}Prover9~\cite{prover9} syntax.  Typically, a goal that
is a theorem is quickly proved by Prover9.

There is another advantage to this approach.  It can be tedious to
generate security goals.  Realistic ones can be large and complicated.
An easy way to create one is to modify a shape analysis sentence.
This typically involves deleting parts of the conclusion.

There is a disadvantage to this approach.  When a goal cannot be
deduced from a shape analysis sentence, one cannot conclude that there
is a counterexample.  It could be simply that the sentence is not
relevant to the security goal.  It could also be that a proof of the
goal depends on a fact not exposed by a shape analysis sentence.  For
example, the precedes relation on nodes in a skeleton is transitive,
but that fact is not available to Prover9.

This paper describes the method that was used to import shape analysis
sentences into the proof assistant
\index{PVS@\pvs}{\pvs}~\cite{cade92-pvs} on top of a detailed theory
of strand spaces specified in~{\pvs}.  In this environment, if the
proof of a security goal depends on the transitivity of the precedes
relation, that fact is available as a lemma.  Furthermore, if a
security goal is false, one can construct a counterexample and use it
to prove the security goal is in fact false.

In {\cpsa}, executions of protocols are represented by skeletons.
Associated with each skeleton is a free message algebra generated by a
finite set of variables.  Skeletons are used as models in the original
paper on shape analysis sentences.

The {\pvs} strand space theory uses bundles over an initial algebra as
its representation of executions of protocols.  This allows for a
shallow embedding of strand space theory in which algebra variables
are replaced by logical variables in {\pvs}.  This specification
choice alleviates the need to manipulate homomorphisms within {\pvs}.
Section~\ref{sec:sas} contains two descriptions that relate skeletons
to bundles.

\paragraph{Motivating Example.}

The running example used throughout this paper is now presented.  An
informal version of the example is presented here, and the example
with all of the details filled in is in Section~\ref{sec:example}.

The following simple example protocol is due to Bruno
Blanchet~\cite{BlanchetHDR}.
$$\begin{array}{r@{{}:{}}l}
A\to B&\enc{\enc{s}{\invk{a}}}{b}\\
B\to A&\enc{d}{s}
\end{array}$$
Alice~($A$) freshly generates symmetric key~$s$, signs the symmetric
key with her private uncompromised asymmetric key $\invk{a}$ and intends
to encrypt it with Bob's~($B$) uncompromised asymmetric key $b$.
Alice expects to receive data~$d$ encrypted, such that only Alice and
Bob have access to it.

The protocol was constructed with a known flaw for expository
purposes, and as a result the secret is exposed due to an
authentication failure.  The protocol does not prevent Alice from
using a compromised key~$b'$, so that Mallory~($M$) and Eve~($E$) can
perform this man-in-the-middle attack:
$$\begin{array}{r@{{}:{}}l}
A\to M&\enc{\enc{s}{\invk{a}}}{b'}\\
M\to B&\enc{\enc{s}{\invk{a}}}{b}\\
B\to E&\enc{d}{s}
\end{array}$$

The protocol fails to provide a means for Bob to ensure the original
message was encrypted using his key.  The authentication failure is
avoided with this variation of the protocol:
\begin{equation}\label{eq:amended protocol}
\begin{array}{r@{{}:{}}l}
A\to B&\enc{\enc{s,b}{\invk{a}}}{b}\\
B\to A&\enc{d}{s}
\end{array}
\end{equation}

In strand space theory, a \emph{strand}\index{strand} is a linearly
ordered sequence of events $e_0\Rightarrow\cdots\Rightarrow e_{n-1}$,
and an \index{event}\emph{event} is either a message transmission
$\bullet\to$ or a reception $\bullet\gets$.  In {\cpsa}, adversarial
behavior is not explicitly represented, so strands always represent
regular behavior.

Regular behavior is constrained by a set of roles that make up the
protocol.  In this protocol, Alice's behaviors must be compatible with
an initiator role, and Bob's behaviors follow a responder role.
\begin{equation}\label{eq:protocol}
\begin{array}{r@{\qquad}l}
\xymatrix@=.6em{\raisebox{-1ex}[0ex][0ex]{\strut\init}\\
  \bullet\ar@{=>}[d]\ar[r]&\enc{\enc{s}{\invk{a}}}{b}\\
  \bullet&\enc{d}{s}\ar[l]}&
\xymatrix@=.6em{&\raisebox{-1ex}[0ex][0ex]{\strut\resp}\\
\enc{\enc{s}{\invk{a}}}{b}\ar[r]&\bullet\ar@{=>}[d]\\
\enc{d}{s}&\bullet\ar[l]}
\end{array}
\end{equation}

The important authentication goal from Bob's perspective is that if an
instance of a responder role runs to completion, there must have been
an instance of the initiator role that transmitted its first message.
Furthermore, assuming the symmetric key is freshly generated, and the
private keys are uncompromised, the two strands agree on keys used for
signing and encryption.

A {\cpsa} analysis of the authentication goal requires two inputs, a
specification of the roles that make up the protocol, as in
Eq.~\ref{eq:protocol}, and a question about runs of the protocol.  The
question in this case is the hypothesis of Eq.~\ref{eq:shape
  analysis}, that an instance of the responder role ran to completion.
In these diagrams, a strand instantiated from a role is distinguished
from a role by placing messages above communication arrows, and
$\succ$ is used to assert an event occurred after another.

\begin{equation}\label{eq:shape analysis}
\begin{array}{rcl}
\xymatrix@R=1em@C=3em{\raisebox{-0.5ex}[0ex][0ex]{\strut\resp}\\
\bullet\ar@{=>}[d]&\ar[l]_{\enc{\enc{s}{\invk{a}}}{b}}\\
\bullet\ar[r]^{\enc{d}{s}}&}&
\raisebox{-4.0ex}{implies}&
\xymatrix@R=1em@C=3em{\raisebox{-0.5ex}[0ex][0ex]{\strut\resp}&&
\raisebox{-0.5ex}[0ex][0ex]{\strut\init}\\
\bullet\ar@{=>}[d]&\succ\ar[l]_{\enc{\enc{s}{\invk{a}}}{b}}&\bullet
\ar[l]_{\enc{\enc{s}{\invk{a}}}{b'}}\\
\bullet\ar[r]^{\enc{d}{s}}&}
\end{array}
\end{equation}

{\cpsa} produces the conclusion in Eq.~\ref{eq:shape analysis}, that
an instance of the initiator role must have transmitted its first
message, but it does not conclude that the strands agree on the key
used for the outer encryption.  When {\cpsa} is run using the amended
protocol in Eq.~\ref{eq:amended protocol}, the strands agree on the
key, and the authentication goal is achieved.

The contribution of this paper is a method of importing security goals
and the results of a {\cpsa} analysis into {\pvs} such that proofs
about the goals can rely on a detailed theory of strand spaces.  The
shape analysis sentence associated with this example is presented in
Section~\ref{sec:example}.

\paragraph{Some Related Work.}

This paper is the result of implementing security goals as described
by Guttman in~\cite{Guttman09}.  The original motivation for extracting
shape analysis sentences rather than following the procedure
in~\cite{Guttman09} was ease of implementation.  With shape analysis
sentences, most of the work is performed by a post-processing stage,
and there were only a few changes made to the core {\cpsa} program.
Only later was it realized the sense in which shape analysis sentences
completely characterize a shape analysis.

The Scyther tool~\cite{cremers06} integrates security goal
verification with its core protocol analysis algorithm.  Security
goals are easy to state as long as they can be expressed using a
predefined vocabulary, however, there is no sense in which Scyther
goals characterize an analysis.

The Protocol Composition Logic~\cite{datta05} provides a contrasting
approach to specifying security goals.  It extends strand spaces by
adding an operational semantics as a small set of reduction rules, and
a run of a protocol is a sequence of reduction steps derived from an
initial configuration.  The logic is a temporal logic interpreted over
runs.

\paragraph{Structure of this Paper.}

Section~\ref{sec:strand spaces} describes strand spaces as formalized
in {\pvs}, Section~\ref{sec:sas} reintroduces shape analysis
sentences, and Section~\ref{sec:example} displays the example above in
full detail.  Appendix~\ref{sec:role annotations} describes an
extension that can be used to prove security goals that involve
long-term state.

\paragraph{Notation.}

A finite sequence is a function from an initial segment of the natural
numbers.  The length of a sequence~$X$ is~$|X|$, and
sequence~$X=\seq{X(0),\ldots, X(n-1)}$ for $n=|X|$.  If~$S$ is a set,
then~$S^\ast$ is the set of finite sequences over~$S$, and~$S^+$ is the
non-empty finite sequences over~$S$.  The prefix of sequence~$X$ of
length~$n$ is~$\prefix{X}{n}$.

\section{Strand Spaces}\label{sec:strand spaces}

\index{PVS@\pvs}{\pvs} is based on classical, typed higher-order
logic.  It has dependent types and parameterized theories.

This section describes the {\pvs} definition of strand
spaces~\cite{ThayerEtal99} in a style motivated by the {\pvs}
language~\cite{cade92-pvs}, that is, the presentation attempts to
minimize the gap between the actual proofs and this content.

\begin{figure}
$$\begin{array}{ll@{{}\typ{}}ll}
\mbox{Sorts:}&
\multicolumn{3}{l}{\mbox{$\top$, $\srt{A}$, $\srt{S}$, $\srt{D}$}}\\
\mbox{Subsorts:}&
\multicolumn{3}{l}{\mbox{$\srt{A}<\top$, $\srt{S}<\top$, $\srt{D}<\top$}}\\
\mbox{Operations:}&(\cdot,\cdot)&\top\times\top\to\top& \mbox{Pairing}\\
&\enc{\cdot}{(\cdot)}&\top\times\srt{A}\to\top&\mbox{Asymmetric encryption}\\
&\enc{\cdot}{(\cdot)}&\top\times\srt{S}\to\top&\mbox{Symmetric encryption}\\
&\invk{(\cdot)}&\srt{A}\to\srt{A}& \mbox{Asymmetric key inverse}\\
&\invk{(\cdot)}&\srt{S}\to\srt{S}& \mbox{Symmetric key inverse}\\
&\cn{a}_i,\cn{b}_i&\srt{A}& \mbox{Asymmetric key constants}\\
&\cn{s}_i&\srt{S}& \mbox{Symmetric key constants}\\
\mbox{Equations:}&\multicolumn{2}{l}{\invk{\cn{a}_i}=\cn{b}_i\quad
\invk{\cn{b}_i}=\cn{a}_i}
&(i\in\nat)\\
&\multicolumn{2}{l}{\all{k\typ\srt{A}}\invk{(\invk{k})}=k}
&\all{k\typ\srt{S}}\invk{k}=k
\end{array}$$
\caption{Simple Crypto Algebra Signature}\label{fig:signature}
\end{figure}

\paragraph{Message Algebra.}

An order-sorted algebra~\cite{GoguenMeseguer92} is a generalization of
a many-sorted algebra in which sorts may be partially ordered.  The
carrier sets associated with ordered sorts are related by the subset
relation.

Figure~\ref{fig:signature} shows the simplification of the {\cpsa}
message algebra signature used by the examples in this paper.
Sort~$\top$ is the sort of all messages.  Messages of sort~$\srt{A}$
(asymmetric keys), sort~$\srt{S}$ (symmetric keys), and sort~$\srt{D}$
(data) are called \index{atoms}\emph{atoms}.  Messages are atoms or
constructed using encryption $\enc{\cdot}{(\cdot)}$ and pairing
$(\cdot,\cdot)$, where the comma operation is right associative and
parentheses are omitted when the context permits.

The message algebra \alg{A} is the \index{algebra!initial}initial
quotient term algebra over the signature.  The canonical
representative for each message is the term that contains no
occurrences of the inverse operation~$\invk{(\cdot)}$.  The set of
messages associated with a sort is called its
\index{carrier set}\emph{carrier set}.  The set of message algebra atoms
is~\alg{B}.

A message~$t_0$ is \index{carried by}\emph{carried by}~$t_1$, written
$t_0\sqsubseteq t_1$ if~$t_0$ can be extracted from a reception
of~$t_1$, assuming plaintext is extractable from encryptions.  In
other words,~$\sqsubseteq$ is the smallest reflexive, transitive
relation such that $t_0\sqsubseteq t_0$, $t_0\sqsubseteq (t_0, t_1)$,
$t_1\sqsubseteq (t_0, t_1)$, and $t_0\sqsubseteq\enc{t_0}{t_1}$.

\paragraph{Strand Spaces.}
A run of a protocol is viewed as an exchange of messages by a finite
set of local sessions of the protocol.  Each local session is called a
\index{strand}\emph{strand}.  The behavior of a strand, its
\index{trace}\emph{trace}, is a finite non-empty sequence of messaging
events.  An \index{event}\emph{event} is either a message transmission
or a reception.  Outbound message $t\in\alg{A}$ is written as~$\outbnd
t$, and inbound message~$t$ is written as~$\inbnd t$.  The set of
traces over $\alg{A}$ is $\tr=(\pm\alg{A})^+$.  A message
\index{origination}\emph{originates} in trace~$C$ at index~$i$ if it is
carried by $C(i)$, $C(i)$ is outbound, and it is not carried by any
event earlier in the trace.

A \index{strand space}\emph{strand space}~$\Theta$ over
algebra~$\alg{A}$ is a finite non-empty sequence of traces in $\tr$.
A strand~$s$ is a member of the domain of $\Theta$, and its trace is
$\Theta(s)$.  An atom~$t$ is
\index{non-origination}\emph{non-originating} in a strand
space~$\Theta$, written \index{non@\fn{non}}$\fn{non}(\Theta,t)$, if it
originates on no strand.

Message events occur at nodes in a strand space.  For each strand~$s$,
there is a node for every event in~$\Theta(s)$.  The
\index{nodes}\emph{nodes} of strand space $\Theta$ are $\{(s,i)\mid
s\in\sdom(\Theta), 0\leq i < |\Theta(s)|\}$, and the event at a node
is $\evt_\Theta(s,i)=\Theta(s)(i)$.  A node names an event in a strand
space.  The relation~$\Rightarrow$ defined by
$\{(s,i-1)\Rightarrow(s,i)\mid s\in\sdom(\Theta), 1\leq
i<|\Theta(s)|\}$ is called the \index{strand succession}\emph{strand
  succession relation}.  An atom~$t$ \index{unique
  origination}\emph{uniquely originates} in a strand space~$\Theta$ at
node~$n$, written \index{uniq@\fn{uniq}}$\fn{uniq}(\Theta,t,n)$, if it
originates in the trace of exactly one strand~$s$ at index~$i$, and
$n=(s,i)$.

\paragraph{Bundles.}
The pair $\Upsilon=(\Theta,\to)$ is a \index{bundle}\emph{bundle} if
it defines a directed acyclic graph, where the vertices are the nodes
of $\Theta$, and an edge represents communication~($\rightarrow$) or
strand succession~($\Rightarrow$) in~$\Theta$.  For communication, if
$n_0\rightarrow n_1$, then there is a message~$t$ such
that~$\evt_\Theta(n_0)=\outbnd t$ and~$\evt_\Theta(n_1)=\inbnd t$.
For each reception node~$n_1$, there is a unique transmission
node~$n_0$ with $n_0\rightarrow n_1$.

Each acyclic graph has a transitive irreflexive
relation~$\prec$\index{precedes@$\prec$~(precedes)} on its vertices.
The relation specifies the causal ordering of nodes in a bundle.  A
transitive irreflexive binary relation is also called a strict order.

\paragraph{Runs of Protocols.}
In a run of a protocol, the behavior of each strand is constrained by
a role in a protocol.  Adversarial strands are constrained by roles as
are non-adversarial strands.  A \index{role}\emph{role} is a set of
\emph{role items} of the form $\role(C,N,U)$, where $C\in\tr$,
$N\in\pow{\alg{B}}^+$, $U\in\pow{\alg{B}}^+$, and the lengths of $C$,
$N$, and $U$ agree.  The trace of the role item is~$C$, its
non-origination assumptions are~$N$, and its unique origination
assumptions are~$U$.  A strand is an instance of a role item in a
strand space, written $\fn{inst}(\Theta,s,\role(C,N,U))$, if for
$h=|\Theta(s)|$,
\begin{enumerate}
\item $h\leq|C|$,
\item $\prefix{C}{h}=\Theta(s)$,
\item for all $i<h$, $t\in N(i)$ implies $\fn{non}(\Theta,t)$, and
\item for all $i<h$, $t\in U(i)$ implies $\fn{uniq}(\Theta,t,(s,i))$.
\end{enumerate}

A \index{protocol}\emph{protocol} is a set of roles.  A
bundle~$\Upsilon=(\Theta,\to)$ is a \index{run of protocol}\emph{run of
  protocol} $P$ if there is a role assignment
$\rl\typ\sdom(\Theta)\to P$ such that for each $s\in\sdom(\Theta)$,
there exists $\role(C,N,U)\in\rl(s)$ such that
\index{inst@\fn{inst}}$\fn{inst}(\Theta,s,\role(C,N,U))$.  Let~$\run_P$ be
the set of bundles that are runs of protocol~$P$.

The description of roles differs from most presentations.  Role
origination assumptions usually are specified by a set of atoms,
instead of a sequence of sets of atoms.  The {\pvs} theory follows the
technique used in the {\cpsa} implementation.  A sequence is used so
as to make explicit the length of the instance of a role at which each
origination assumption applies.  Furthermore, roles are normally
described as templates to be copied and refined, rather than as sets
of role items.  This difference will be addressed in the next section.

\begin{figure}
$$\begin{array}{r@{{}={}}l}
\fn{create}(t\in\alg{B})&\seq{\outbnd t}\\
\fn{pair}(t_0\typ\top, t_1\typ\top)&
\seq{\inbnd t_0,\inbnd t_1,\outbnd (t_0,t_1)}\\
\fn{sep}(t_0\typ\top, t_1\typ\top)&
\seq{\inbnd (t_0, t_1),\outbnd t_0,\outbnd t_1}\\
\fn{enc}(t\typ\top, k\typ\srt{A}|\srt{S})&
\seq{\inbnd t,\inbnd k,\outbnd \enc{t}{k}}\\
\fn{dec}(t\typ\top, k\typ\srt{A}|\srt{S})&
\seq{\inbnd \enc{t}{k},\inbnd\invk{k},\outbnd t}
\end{array}$$
\caption{Adversary Traces}\label{fig:adversary}
\end{figure}

\paragraph{Adversary Model.}
The traces of the roles that constrain adversarial
behavior\index{adversary} are in Figure~\ref{fig:adversary}.  For the
encryption related traces, $k\typ\srt{A}|\srt{S}$ asserts that
$k\typ\srt{A}$ or $k\typ\srt{S}$.  There are no origination
assumptions in the adversary's roles.

The parameter of the \fn{create} role is restricted to atoms.  In
fact, the defining characteristic of an atom is it denotes the set of
messages the adversary can create out of thin air modulo origination
assumptions.

\section{Importing Protocol Analyses}\label{sec:sas}

Unlike the {\pvs} theories, {\cpsa} does not use bundles as its
representation of runs of a protocol.  Instead, it uses abstract
interpretation to discuss sets of bundles using an object called a
skeleton.

\paragraph{Skeletons.}

Skeletons and bundles share the same signature, but their algebras
differ.  Rather than using the initial algebra, each skeleton has a
\index{algebra!free}free algebra generated from a finite set of
variables.  Subscripting is used to indicate when a free algebra is in
use.  Thus, if~$X$ is a set of variables along with their sorts, then
$\Theta_X$ is a strand space over the free algebra generated
by~$X$,~$\alg{A}_X$.

The treatment of roles is slightly different in {\cpsa}.  The {\pvs}
theories define a role as a set of role items as described earlier.
In {\cpsa}, a role is a template that is instantiated to produce the
equivalent of a role item via an algebra homomorphism~$\sigma$.  Thus
for {\cpsa} role $r=\role(C_X,N_X,U_X)$, the related role item-like
object is $\role(\sigma\circ C_X,\sigma\circ N_X,\sigma\circ U_X)$,
which by abuse of notation, we write as $\sigma(r)$.  A {\pvs} role is
\index{template inspired role}\emph{template inspired} by
$r=\role(C_X,N_X,U_X)$ if it is of the form $\{\sigma(r)\mid
\sigma\in\alg{A}_X\to\alg{A}\}$.

Associated with each skeleton is protocol~$P$ as a set of roles in
template form, and a strand space~$\Theta_X$.  In {\cpsa} syntax, the
trace and role associated with a strand is specified by an
\index{instance}\emph{instance}.  An instance is of the form
$\insta(r,h,\sigma)$, where~$r\in P$ is a role, $h$ specifies the
length of a trace instantiated from the role, and~$\sigma$ specifies
how to instantiate the variables in the role to obtain the trace.
Thus the trace in~$\tr_X$ associated with
$\insta(\role(C_Y,U_Y,N_Y),h,\sigma)$ is $\sigma\circ\prefix{C_Y}{h}$,
the prefix of length~$h$ that results from applying~$\sigma$ to~$C_Y$,
where~$\sigma$ is a homomorphism from~$\alg{A}_Y$ to~$\alg{A}_X$.

A \index{skeleton}\emph{skeleton} has the form
$\skel(P,I_X,\prec,N_X,U_X)$, where~$P$ is the protocol, $I_X$ is an
instance map, \index{precedes@$\prec$~(precedes)}$\prec$ is a strict
node ordering, $N_X$ is a set of atoms assumed to be non-originating,
and~$U_X$ is a set of atoms assumed to be uniquely originating.  The
instance map~$I_X$ is a finite non-empty sequence of instances, where
the range of the homomorphism associated with each instance
is~$\alg{A}_X$.

The strand space associated with a skeleton is defined by its instance
map.  When $I_X(s)=\insta(\role(C_Y,U_Y,N_Y),h,\sigma)$, trace
$\Theta_X(s)=\sigma\circ\prefix{C_Y}{h}$.  We write
$\skel(P,I_X,\prec,N_X,U_X)$ as $\skel_X(P,I,\prec,N,U)$ in what
follows.

\paragraph{Homomorphisms.}
Let $k_0=\skel_X(P,I_0,\prec_0,N_0,U_0)$ and
$k_1=\skel_Y(P,I_1,\prec_1,\break N_1,U_1)$ be skeletons, and
let~$\Theta_0$ and~$\Theta_1$ be the strand spaces associated
with~$I_0$ and~$I_1$.  There is a \index{homomorphism}\emph{skeleton
  homomorphism} $(\phi,\sigma)\typ k_0\mapsto k_1$ if~$\phi$
and~$\sigma$ are maps with the following properties:
\begin{enumerate}
\item\label{item:strand} $\phi$ maps strands of~$k_0$ into those
  of~$k_1$, and nodes as $\phi((s,i))=(\phi(s),i)$, that is $\phi$ is
  in $\sdom(\Theta_0)\to\sdom(\Theta_1)$;
\item\label{item:msg} $\sigma\in\alg{A}_X\to\alg{A}_Y$ is a message
  algebra homomorphism;
\item\label{item:node} $n\in\fn{nodes}(\Theta_0)$ implies
  $\sigma(\evt_{\Theta_0}(n))=\evt_{\Theta_1}(\phi(n))$;
\item\label{item:order} $n_0\prec_0
n_1$ implies $\phi(n_0)\prec_1\phi(n_1)$;
\item\label{item:non} $\sigma(N_0)\subseteq N_1$;
\item\label{item:uniq} $t\in U_0$ implies $\sigma(t)\in U_1$ and
  $\phi(\orig_{k_0}(t))=\orig_{k_1}(\sigma(t))$;
\end{enumerate}
where $\orig_k(t)$ is the node of the event at which~$t$ originates.
Property~\ref{item:uniq} says the node at which an atom uniquely
originates is preserved by homomorphisms.

The definition of a skeleton homomorphism can be extended so that a
bundle can be in the range.  In this case, the range of the message
algebra homomorphism is the initial algebra~\alg{A}.
Property~\ref{item:non} and~\ref{item:uniq} require small tweaks: for
non-origination, $t\in N_0$ implies $\fn{non}(\Theta_1,\sigma(t))$,
and for unique origination, $t\in U_0$ implies
$\fn{uniq}(\Theta_1,\sigma(t),\phi(\orig_{k_0}(t)))$.  Notice that a
homomorphism between skeletons preserves the protocol.  For the case
of a bundle in the range, we require that it be a run of the protocol
of the skeleton.  Let~$\pt(k)$ be~$P$, the protocol of~$k$, so that
the final condition can be written as $\Upsilon\in\run_{\pt(k)}$.  The
bundles associated with skeleton~$k$ are $\{\Upsilon\mid
\some{\delta}\delta\typ k\mapsto\Upsilon \}$.

When given a point-of-view skeleton~$k_0$, if {\cpsa} terminates, it
produces a shape analysis of the form $\delta_i\typ k_0\mapsto k_i$.
The skeletons~$k_i$ are the shapes of this protocol analysis, and they
specify all of the non-adversarial behavior associated with a run
compatible with the point-of-view skeleton.  The shape analysis is
\index{complete shape analysis}\emph{complete} if for all~$\Upsilon$
and $\delta$, $\delta\typ k_0\mapsto\Upsilon$ iff
$\some{i,\delta'}\delta'\typ k_i\mapsto\Upsilon$.
See~\cite{cpsatheory11} for a proof of {\cpsa}'s completeness.

\paragraph{Shape Analysis Sentences.}
The results of a shape analysis are imported into {\pvs} by translating
the analysis into a sentence that is asserted as an axiom in {\pvs},
justified by the fact that the shape analysis is complete.  The
translation is similar to the one appearing in~\cite{Ramsdell12},
however this one is superior due to the foundation provided by the
bundle-based strand space theory presented earlier.  Much of the
translation is simply valid by definition.  Pay particular attention
to the translation of instances.

We define~$\form_\Upsilon(k)=(Y,\Phi)$, where~$\Phi$ is $k$'s
\index{skeleton formula}skeleton formula, and~$Y$ is the formula's set
of variables along with their sorts.  Let $k=\skel_X(P,I,\prec,\break N,U)$.
The set $Y$ is~$X$ augmented with a fresh variable~$z_s$ for each
strand $s\in\sdom(I)$.  In formulas, $z_s$ ranges over
$\sdom(\Theta)$, where~$\Theta$ is the strand space of~$\Upsilon$.
The formula~$\Phi$ is a conjunction of atomic formulas composed as
follows.

\begin{itemize}
\item For each $s\in\sdom(I)$, assert $\fn{htin}(\Theta, z_s, h,
  \sigma(r))$, where $I(s)=\insta(r,h,\sigma)$,
  and\index{htin@\fn{htin}}\index{inst@\fn{inst}} $\fn{htin}(\Theta, z_s,
  h,r)=h\leq|\Theta(s)|\wedge\fn{inst}(\Theta, z_s,r)$.
\item For each $(s,i)\prec(s',i')$, assert
  $(z_s,i)\prec_\Upsilon(z_{s'},i')$.
\item For each $t\in N$, assert $\fn{non}(\Theta,t)$.
\item For each $t\in U$, assert $\fn{uniq}(\Theta, t,(z_s, i))$,
  where $(s,i)=\orig_k(t)$.
\end{itemize}

When $\form_\Upsilon(k)=(X,\Phi)$, the
predicate~$\Sigma_k=\funct{\Upsilon}\Upsilon\in\run_{\pt(k)}\land\some{X}\Phi$
is closed.  (In what follows,~$X$ will refer to the set of algebra
variables augmented with strand variables.)  The bundle~$\Upsilon$ is
a pair~$(\Theta,\to)$, so the strand space~$\Theta$ is the first
element of the pair, and~$\prec_\Upsilon$ is derived from the
communication edges~$\to$ and the strand succession edges in~$\Theta$.

The formula describing a skeleton is order-sorted.  A truth assignment
that tells one how to interpret each skeleton formula must account for
this fact.  As such, the domain of discourse for
\index{interpretation}interpretation~$\interp(\Upsilon)$ contains the
\index{carrier set}carrier set for each sort in the initial message
algebra.  Additionally, for~$\Upsilon=(\Theta,\to)$, the domain of
discourse includes the set~$\sdom(\Theta)$, used to interpret strand
variables~$z_s$.  The interpretation of predicates and function
symbols follows the case of a many-sorted
algebra~\cite[Section~4.3]{Enderton01}.
See~\cite[Section~4]{GoguenMeseguer92} for a description of the
reduction of an order-sorted algebra to a many-sorted algebra.

\begin{thm}\label{thm:skeleton formula}\index{Skeleton Formula Theorem}
Let $\form_\Upsilon(k)=(X,\Phi)$ and
$\Sigma_k=\funct{\Upsilon}\Upsilon\in\run_{\pt(k)}\land\some{X}\Phi$.
For all bundles~$\Upsilon$, $\Sigma_k(\Upsilon)$ iff there is a
homomorphism from~$k$ to $\Upsilon$, i.e.\
$$\Sigma_k(\Upsilon)\Longleftrightarrow
\some{\delta}\delta\typ k\mapsto\Upsilon.$$
\end{thm}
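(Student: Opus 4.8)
The plan is to prove the two directions separately, unwinding the definition of $\form_\Upsilon(k)$ and of a skeleton homomorphism $\delta\typ k\mapsto\Upsilon$, and noting that the conjuncts of $\Phi$ were \emph{designed} to mirror the homomorphism conditions.

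\medskip
\textbf{($\Longleftarrow$).} Suppose $\delta=(\phi,\sigma)\typ k\mapsto\Upsilon$ with $k=\skel_X(P,I,\prec,N,U)$ and $\Upsilon=(\Theta,\to)$. First, since a homomorphism into a bundle requires $\Upsilon\in\run_{\pt(k)}$, the left conjunct of $\Sigma_k(\Upsilon)$ holds. For the right conjunct I must exhibit a witnessing assignment for the variables $X$ (the algebra variables together with the strand variables $z_s$): take $\sigma$ on the algebra variables and $\phi$ on the strand variables, i.e.\ $z_s\mapsto\phi(s)$. Then I verify each conjunct of $\Phi$ in turn. For the $\fn{htin}(\Theta,z_s,h,\sigma(r))$ conjunct, where $I(s)=\insta(r,h,\sigma_s)$ so that $\Theta_k(s)=\sigma_s\circ\prefix{C}{h}$: property~\ref{item:node} of the homomorphism says $\sigma$ carries the event at each node of $s$ to the event at $\phi(s)$, which unwinds exactly to $h\le|\Theta(\phi(s))|$ together with $\fn{inst}(\Theta,\phi(s),\sigma(r))$ (using properties~\ref{item:non},~\ref{item:uniq} for the origination clauses of $\fn{inst}$). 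The ordering conjuncts follow from property~\ref{item:order}, the $\fn{non}$ conjuncts from property~\ref{item:non}, and the $\fn{uniq}$ conjuncts from property~\ref{item:uniq}, which was explicitly noted to preserve the origination node. Hence $\some{X}\Phi$ holds under this assignment.

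\medskip
\textbf{($\Longrightarrow$).} Conversely, suppose $\Sigma_k(\Upsilon)$, so $\Upsilon\in\run_{\pt(k)}$ and there is an assignment making $\Phi$ true. Read off from this assignment a map $\sigma$ on algebra variables and, for each strand $s$ of $k$, a strand $\phi(s):=$ the value assigned to $z_s$; extend $\phi$ to nodes by $\phi((s,i))=(\phi(s),i)$. I then check that $(\phi,\sigma)$ satisfies conditions~\ref{item:strand}--\ref{item:uniq} for a homomorphism into a bundle. Condition~\ref{item:msg} is automatic since $\sigma$ is an algebra homomorphism. The $\fn{htin}$ conjuncts give condition~\ref{item:strand} (that each $\phi(s)$ is a strand of $\Theta$ and the traces are long enough) and, via $\fn{inst}$, condition~\ref{item:node} on the events of $s$ together with the bundle-flavored versions of~\ref{item:non} and~\ref{item:uniq} restricted to the role's own origination assumptions; the global $\fn{non}$ and $\fn{uniq}$ conjuncts supply the rest of~\ref{item:non},~\ref{item:uniq}; and the $\prec_\Upsilon$ conjuncts supply~\ref{item:order}. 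A small point to nail down: that the $\fn{htin}$ conjunct, which constrains only the event sequence, together with the way $\Theta_k(s)$ was defined from $\insta(r,h,\sigma_s)$, indeed pins down $\sigma(\evt(n))=\evt(\phi(n))$ for \emph{every} node $n$ of $k$, not just for a spanning subset — this is where the explicit length bookkeeping in the role-item formulation pays off.

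\medskip
\textbf{Main obstacle.} The routine directions are genuinely routine — the translation $\form_\Upsilon$ was built conjunct-by-conjunct to match the homomorphism clauses, so much of this is, as the text says, ``valid by definition.'' The one place demanding care is the instance/$\fn{htin}$ correspondence: I must check that instantiating a \emph{template} role $r=\role(C_Y,N_Y,U_Y)$ by $\sigma_s$ to length $h$, then pushing forward by $\sigma$, yields precisely the data that $\fn{inst}(\Theta,\phi(s),\sigma(r))$ asserts — in particular that composing homomorphisms $\sigma\circ\sigma_s$ behaves correctly on the trace prefix and on the per-index origination sets $N(i),U(i)$, and that $\orig_k(t)$ for $t\in U$ is sent by $\phi$ to the node named in the $\fn{uniq}$ conjunct. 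This is the step where the mismatch between {\cpsa}'s template-with-homomorphism presentation and the {\pvs} set-of-role-items presentation has to be reconciled, so I expect the bulk of the argument to live there.
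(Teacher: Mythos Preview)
Your proposal is correct and follows essentially the same approach as the paper: construct the variable assignment $\alpha$ from $(\phi,\sigma)$ (and conversely) via $\alpha(z_s)=\phi(s)$ and $\alpha(x)=\sigma(x)$, then check the conjuncts of $\Phi$ against the homomorphism properties one by one. You correctly flag the $\fn{htin}$/instance correspondence as the nontrivial step, which is exactly where the paper also directs attention; if anything, you are slightly more careful than the paper in noting that the origination clauses inside $\fn{inst}$ (not just the trace-prefix clause) must be discharged, and in recording that $\Upsilon\in\run_{\pt(k)}$ is itself part of the homomorphism-into-bundle definition.
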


Thus $\{\Upsilon\mid\Sigma_k(\Upsilon)\}$ is another way to
specify the bundles associated with skeleton~$k$.

The intuition behind this proof is the observation that there is an
intimate relationship between the homomorphism and the \index{variable
  assignment}variable assignment used to interpret existentially
quantified variables.

\begin{proof}
Consider the backward implication first.  We are
given~$k=\skel(P,I,\prec,\break N,U)$, $\delta=(\phi,\sigma)$,
and~$\Upsilon=(\Theta,\to)$ such that $\delta\typ
k\mapsto\Upsilon$. To interpret formula~$\Phi$, construct the variable
assignment~$\alpha$ as follows.  For each strand variable~$z_s$,
$\alpha(z_s)=\phi(s)$.  Each algebra variable~$x$ has a corresponding
logical variable, so $\alpha(x)=\sigma(x)$.

The interpretation~$\interp(\Upsilon)$ satisfies~$\Phi$ with~$\alpha$
if each conjunct does so.  For some $s\in\sdom(I)$, consider the
atomic formula $\fn{htin}(\Theta, z_s, h, \sigma'(r))$, where
$I(s)=\insta(r,h,\sigma')$.  Its interpretation is $\fn{htin}(\Theta,
\alpha(z_s), h, \alpha(\sigma'(r)))$ which is
$\fn{htin}(\Theta, \phi(s), h, \sigma(\sigma'(r)))$.  By
definition, $\fn{htin}(\Theta, \phi(s), h, \sigma
(\sigma'(r)))=h\leq|\Theta(\phi(s))|\wedge\fn{inst}(\Theta, \phi(s),
\sigma(\sigma'(r)))$.  By Property~\ref{item:strand} in the
definition of a homomorphism, the length of strand~$\phi(s)$ must be
greater than or equal to~$h$.  Let $r=\role(C_Y,U_Y,N_Y)$.  Recall
that $\fn{inst}(\Theta, \phi(s),\sigma(\sigma'(r)))$ implies
that $\sigma\circ\sigma'\circ\prefix{C_y}{h'}=\Theta(\phi(s))$, where
$h'=|\Theta(\phi(s))|$, which is true by Property~\ref{item:node}.

For the~$\prec_\Upsilon$ predicate, Property~\ref{item:order} in the
definition of homomorphism applies, for \fn{non}, it's the tweak of
Property~\ref{item:non}, and for \fn{uniq}, it's the tweak of
Property~\ref{item:uniq}.

Now consider the forward implication in the theorem.  In this case, we
are given the variable assignment~$\alpha$ such
that~$\interp(\Upsilon)$ satisfies~$\Phi$ with~$\alpha$ and must
construct the corresponding homomorphism.  For each strand
variable~$z_s$, $\phi(s)=\alpha(z_s)$.  Each algebra variable~$x$ has a
corresponding logical variable, so $\sigma(x)=\alpha(x)$.

With this definition of~$\delta$, we show that
$\delta\typ k\mapsto\Upsilon$.
Substitution~$\sigma$ is a message algebra homomorphism, thus
demonstrating Property~\ref{item:msg}.

\begin{sloppypar}
For all $s\in\sdom(I)$, assume~$\interp(\Upsilon)$ satisfies
$\fn{htin}(\Theta, z_s, h, \sigma'(r))$ with~$\alpha$, where
$I(s)=\insta(r,h,\sigma')$.  Therefore, $\fn{htin}(\Theta,
\alpha(z_s), h, \alpha(\sigma'(r)))$ is true, and so is
$\fn{htin}(\Theta, \phi(s), h, \sigma(\sigma'(r)))$ and by
definition $h\leq|\Theta(\phi(s))|$ and $\fn{inst}(\Theta,
\phi(s),\sigma(\sigma'(r)))$.  The height restriction
$h\leq|\Theta(\phi(s))|$ ensures~$\phi$ maps correctly as prescribed
in Property~\ref{item:strand}.  Consider node $n=(s,i)$ in~$k$.  The event
in~$k$ at~$n$ is~$\sigma'(C_Y(i))$ where $r=\role(C_Y,U_Y,N_Y)$.  The
\fn{inst} assertion implies that event $\evt_{\Theta}(\phi(n))$ is
$\sigma(\sigma'(C_Y(i)))$, thus demonstrating Property~\ref{item:node}.
\end{sloppypar}

Property~\ref{item:order}, \ref{item:non}, and~\ref{item:uniq} are
straightforward.
\end{proof}

In what follows, a sentence that universally quantifies a bundle, as
in $\all{\Upsilon}\Phi$, is true if for all~$\Upsilon$,
$\interp(\Upsilon)$ models~$\Phi$.\index{true sentence} Define
$\models_{\interp(\Upsilon)}\Phi$ to mean $\interp(\Upsilon)$
models~$\Phi$, and $\models_{\interp(\Upsilon)}\Phi$ with~$\alpha$ to
mean $\interp(\Upsilon)$ satisfies~$\Phi$ with variable
assignment~$\alpha$.

Given a set of homomorphisms $\delta_i\typ k_0\mapsto k_i$, its shape
analysis sentence $\sent(\delta_i\typ k_0\mapsto k_i)$ is\index{shape analysis sentence}
\begin{equation}\label{eqn:shape sentence}
\all{\Upsilon\in\run_{\pt(k_0)},X_0}\Phi_0 \Longleftrightarrow
\bigvee_i\some{X_i}\Delta_i\wedge\Phi_i,
\end{equation}
where $\form_\Upsilon(k_0)=(X_0,\Phi_0)$.  The same procedure
produces~$X_i$ and~$\Phi_i$ for shape~$k_i$ with one proviso---the
variables in $X_i$ that also occur in~$X_0$ must be renamed to avoid
trouble while encoding the \index{structure preserving map}structure
preserving maps~$\delta_i$.

The structure preserving maps~$\delta_i=(\phi_i,\sigma_i)$ are encoded
in~$\Delta_i$ by a conjunction of equalities.  Map~$\sigma_i$ is coded
as equalities between a message algebra variable in the domain
of~$\sigma_i$ and the term it maps to.  Map~$\phi_i$ is coded as
equalities between strand variables in~$\Phi_0$ and strand variables
in~$\Phi_i$.  Let~$Z_0$ be the sequence of strand variables freshly
generated for~$k_0$, and~$Z_i$ be the ones generated for~$k_i$.  The
strand mapping part of~$\Delta_i$ is
$\bigwedge_{j\in\sdom(\Theta_0)}Z_0(j)=Z_i(\phi_i(j))$,
where~$\Theta_0$ is the strand space associated with~$k_0$.

An example shape analysis sentence is displayed in
Figure~\ref{fig:blanchet's shape analysis sentence}.

\begin{thm}\label{thm:sas}\index{Shape Analysis Sentence Theorem}
If $\delta_i\typ k_0\mapsto k_i$ is a complete shape analysis then
$\sent(\delta_i\typ k_0\mapsto k_i)$ is true.
\end{thm}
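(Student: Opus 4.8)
The plan is to derive Theorem~\ref{thm:sas} from Theorem~\ref{thm:skeleton formula} together with the completeness hypothesis, by unwinding what it means for the universally quantified biconditional in Eq.~\ref{eqn:shape sentence} to be true. Fix a bundle $\Upsilon\in\run_{\pt(k_0)}$ and a variable assignment $\alpha$ over $X_0$. Since each conjunction $\Phi_j$ mentions only the variables of its own $X_j$, and the $X_i$ have been renamed apart from $X_0$ and from each other, it suffices to show that $\models_{\interp(\Upsilon)}\Phi_0$ with $\alpha$ holds iff for some index $i$ there is an extension $\beta$ of $\alpha$ to $X_0\cup X_i$ with $\models_{\interp(\Upsilon)}\Delta_i\wedge\Phi_i$ with $\beta$. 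Throughout I would pass between assignments and homomorphisms exactly as in the proof of Theorem~\ref{thm:skeleton formula}: an assignment satisfying $\Phi_j$ over $\interp(\Upsilon)$ determines the homomorphism $\delta\typ k_j\mapsto\Upsilon$ whose strand part sends a strand $s$ to the value of $z_s$ and whose algebra part is the assignment restricted to the algebra variables, and conversely a homomorphism $k_j\mapsto\Upsilon$ yields such an assignment.

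For the forward direction, $\models_{\interp(\Upsilon)}\Phi_0$ with $\alpha$ and $\Upsilon\in\run_{\pt(k_0)}$ give $\Sigma_{k_0}(\Upsilon)$, so the forward half of Theorem~\ref{thm:skeleton formula} produces a homomorphism $\delta\typ k_0\mapsto\Upsilon$, and its construction makes $\delta$ the map read off from $\alpha$. Completeness then yields a shape $k_i$ and a homomorphism $k_i\mapsto\Upsilon$; here I must use not merely the bare existence assertion but the factorization form of {\cpsa}'s completeness theorem (from \cite{cpsatheory11}), namely that this homomorphism can be chosen as some $\delta'$ with $\delta=\delta'\circ\delta_i$. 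Reading $\delta'$ back as an assignment over $X_i$ and adjoining it to $\alpha$ gives $\beta$; Theorem~\ref{thm:skeleton formula} gives $\models_{\interp(\Upsilon)}\Phi_i$ with $\beta$, and the equalities of $\Delta_i$ hold under $\beta$ precisely because $\Delta_i$ was defined to encode $\delta_i$ and $\delta=\delta'\circ\delta_i$ says the $X_0$-values named by $\alpha$ agree with the pushforwards along $\delta_i$ of the $X_i$-values named by $\beta$.

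For the backward direction, suppose $\models_{\interp(\Upsilon)}\Delta_i\wedge\Phi_i$ with an extension $\beta$ of $\alpha$. Reading the $X_i$-part of $\beta$ back gives a homomorphism $\delta'\typ k_i\mapsto\Upsilon$ by Theorem~\ref{thm:skeleton formula}; composing with the analysis map $\delta_i\typ k_0\mapsto k_i$ (homomorphisms compose, and $\pt(k_i)=\pt(k_0)$, so the side condition that the bundle be a run of $\pt(k_0)$ is already in hand) yields $\delta'\circ\delta_i\typ k_0\mapsto\Upsilon$, and the backward half of Theorem~\ref{thm:skeleton formula} turns this into an $X_0$-assignment satisfying $\Phi_0$. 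The equalities in $\Delta_i$ force $\beta$ restricted to $X_0$ to coincide with that assignment, so $\models_{\interp(\Upsilon)}\Phi_0$ with $\alpha$, as required.

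I expect the only real obstacle to be the one flagged in the forward direction: the completeness statement as displayed asserts only that a shape homomorphism into $\Upsilon$ exists, whereas making $\Delta_i$ come out true requires the given homomorphism $k_0\mapsto\Upsilon$ to factor through one of the analysis maps $\delta_i$. I would either strengthen the definition of \emph{complete shape analysis} to carry this factorization property or cite it directly; everything else is routine bookkeeping about the definition of $\Delta_i$ and how $\interp(\Upsilon)$ evaluates equalities between algebra terms and between strand variables.
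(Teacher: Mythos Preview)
Your proposal is correct and follows the same route as the paper: pass between variable assignments and homomorphisms via (the proof of) Theorem~\ref{thm:skeleton formula}, and invoke completeness in the forward direction. The obstacle you flag is genuine and the paper handles it exactly as you propose---it simply asserts the factorization $\delta'_0=\delta'_i\circ\delta_i$ (see the commutative diagram in Figure~\ref{fig:homomorphism}) even though the displayed definition of \emph{complete} only promises existence of some $k_i\mapsto\Upsilon$; moreover, your direct backward argument, which uses $\Delta_i$ to force the induced $X_0$-assignment to coincide with the given $\alpha$, is actually tighter than the paper's contrapositive, which never invokes $\Delta_i$ and so does not rule out the case where $\Phi_0$ fails at $\alpha$ yet some other homomorphism $k_0\mapsto\Upsilon$ exists.
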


\begin{proof}
We show for all bundles~$\Upsilon\in\run_{\pt(k_0)}$,
$\models_{\interp(\Upsilon)}
\all{X_0}\Phi_0\Longleftrightarrow\bigvee_i\some{X_i}\Delta_i\wedge\Phi_i$,
which reduces to showing $\models_{\interp(\Upsilon)}
\Phi_0\Longleftrightarrow\bigvee_i\some{X_i}\Delta_i\wedge\Phi_i$
with~$\alpha$ for all variable assignments~$\alpha$ for $X_0$.  Take
cases on the truth of $\models_{\interp(\Upsilon)}\Phi_0$
with~$\alpha$.

\begin{figure}
$$\xymatrix{k_0\ar[r]^{\delta_i}\ar[dr]_{\delta'_0}&k_i\ar[d]^{\delta'_i} \\
&\Upsilon}$$
\caption{Homomorphism Diagram}\label{fig:homomorphism}
\end{figure}
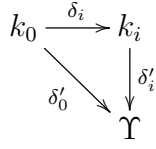

When true, by the proof of Theorem~\ref{thm:skeleton formula},
$\alpha$ specifies the homomorphism $\delta'_0\typ
k_0\mapsto\Upsilon$.  Because the shape analysis is complete, for
some~$i$, $\delta'_i\typ k_i\mapsto\Upsilon$.  By
Theorem~\ref{thm:skeleton formula}, $\models_{\interp(\Upsilon)}
\some{X_i}\Phi_i$ and therefore $\models_{\interp(\Upsilon)} \Phi_i$
with~$\alpha_i$, where~$\alpha_i$ is the variable assignment derived
from~$\delta'_i$.  Let $\alpha\oplus\alpha_i$ be the union of the
mappings in~$\alpha$ and~$\alpha_i$ (the domains of $\alpha$
and~$\alpha_i$ are disjoint).  The proof of this case is complete when
we show $\models_{\interp(\Upsilon)} \Delta_i$
with~$\alpha\oplus\alpha_i$.  Recall that $\delta_i\typ k_0\mapsto
k_i$ and let~$\delta_i=(\phi_i,\sigma_i)$.  See
Figure~\ref{fig:homomorphism} and note that
$\delta'_0=\delta'_i\circ\delta_i$.  For each variable~$x$ in the
domain of~$\sigma_i$, $\Delta_i$ contains the equation
$x=\sigma_i(x)$.  Its interpretation is
$\alpha(x)=\alpha_i(\sigma_i(x))$.  In other words,
$\sigma'_0(x)=\sigma'_i(\sigma_i(x))$, because
$\sigma'_0=\sigma'_i\circ\sigma_i$.  For each strand~$j$ in~$k_0$,
$\Delta_i$ contains the equation $Z_0(j)=Z_i(\phi_i(j))$.  Its
interpretation is $\alpha(Z_0(j))=\alpha_i(Z_i(\phi_i(j)))$.  In other
words, $\phi'_0(j)=\phi'_i(\phi_i(j))$, because
$\phi'_0=\phi'_i\circ\phi_i$.

When $\not\models_{\interp(\Upsilon)}\Phi_0$ with~$\alpha$, there is
no homomorphism of the form $\delta'_0\typ k_0\mapsto\Upsilon$.
Suppose for some~$i$, $\delta'_i\typ k_i\mapsto\Upsilon$.  Then
$\delta'_0=\delta'_i\circ\delta_i$ is a contradiction, so, for
all~$i$, $\delta'_i\typ k_i\not\mapsto\Upsilon$.  By
Theorem~\ref{thm:skeleton formula}, $\not\models_{\interp(\Upsilon)}
\some{X_i}\Phi_i$ and therefore $\not\models_{\interp(\Upsilon)}
\some{X_i}\Phi_i$ with~$\alpha$ implying there is no disjunct on the
R.H.S. that is true.
\end{proof}

\paragraph{Listeners.}

The relationship between skeletons and bundles is not as tidy as
previously described.  {\cpsa} supports something called
\index{listener strands}listener strands that do not appear in
bundles.  A listener strand in a skeleton is an artificial strand used
to assert that some message is available to the adversary.  A listener
strand has length two, and the second event is the transmission of the
message received by the first event.

When translating a listener strand into a bundle, one simply asserts
the existence of a node in the bundle that transmits the strand's
message, and that node inherits the node orderings associated with
the nodes of the listener strand.

The definition of a homomorphism into a bundle requires adjustment to
allow for the disappearance of listener strands.  In particular, the
definition of a homomorphism must use the roles in instances to
identify listener strands.

\section{Detailed Example}\label{sec:example}

The simple example protocol is now revisited.
$$\begin{array}{r@{{}:{}}l}
A\to B&\enc{\enc{s}{\invk{a}}}{b}\\
B\to A&\enc{d}{s}
\end{array}$$
Symmetric key~$s$ is freshly generated, asymmetric keys $\invk{a}$ and
$\invk{b}$ are uncompromised, and the goal of the protocol is to keep
data~$d$ secret.  The {\pvs} description of the protocol in
Eq.~\ref{eq:protocol}, has an initiator and a responder role.  The
role items are:

\begin{equation}\label{eq:cpsa protocol}
\begin{array}{r@{{}={}}l}
\init(a,b\typ\srt{A},s\typ\srt{S},d\typ\srt{D})&
\role(\langle\outbnd\enc{\enc{s}{\invk{a}}}{b},\inbnd\enc{d}{s}\rangle,
\langle\emptyset,\emptyset\rangle,
\langle\{s\},\emptyset\rangle)\\
\resp(a,b\typ\srt{A},s\typ\srt{S},d\typ\srt{D})&
\role(\langle\inbnd\enc{\enc{s}{\invk{a}}}{b},\outbnd\enc{d}{s}\rangle,
\langle\emptyset,\emptyset\rangle,
\langle\emptyset,\emptyset\rangle)
\end{array}
\end{equation}

The {\init} role is
$\{r\mid\some{a,b\typ\srt{A},s\typ\srt{S},d\typ\srt{D}}
r=\init(a,b,s,d)\}$ and the {\resp} role is analogous.  This rendition
of each role ensures it is \index{template inspired role}template inspired.

In this protocol, the unique origination assumption is specified in
the {\init} role, while the two non-origination assumptions are
specified in skeletons.

\begin{figure}
$$\begin{array}{@{}r@{}c@{}l@{}}
k_0&{}={}&\skel_X(\begin{array}[t]{@{}ll}
\{\init(a_0,b_0,s_0,d_0),\resp(a_1,b_1,s_1,d_1)\},
&\mbox{Protocol}\\
\seq{\insta(\resp,2,\{a_1\mapsto a,b_1\mapsto b,s_1\mapsto s,d_1\mapsto d\})},
&\mbox{Instances}\\
\emptyset,
&\mbox{Node orderings}\\
\{\invk{a},\invk{b}\},
&\mbox{Non-origination}\\
\emptyset)
&\mbox{Unique origination}
\end{array}\\
&&\mbox{where $X=a,b\typ\srt{A},s\typ\srt{S}, d\typ\srt{D}$}\\
k_1&{}={}&\skel_Y(\begin{array}[t]{@{}ll}
\{\init(a_0,b_0,s_0,d_0),\resp(a_1,b_1,s_1,d_1)\},
&\mbox{Protocol}\\
\langle\begin{array}[t]{@{}l}
\insta(\resp,2,\{a_1\mapsto a,b_1\mapsto b,s_1\mapsto s,d_1\mapsto d\}),\\
\insta(\init,1,\{a_0\mapsto a,b_0\mapsto b',s_0\mapsto s,d_0\mapsto d'\})\rangle
\end{array}
&\begin{array}[t]{@{}l}
\mbox{Instances}\\
\mbox{\emph{Note $b_0$ is $b'$ not $b$!}}
\end{array}\\
\{(1,0)\prec(0,0)\},
&\mbox{Node orderings}\\
\{\invk{a},\invk{b}\},
&\mbox{Non-origination}\\
\{s\})
&\mbox{Unique origination}
\end{array}\\
&&\mbox{where $Y=a,b,b'\typ\srt{A},s\typ\srt{S}, d,d'\typ\srt{D}$}\\
\delta_1&{}={}&(\seq{0},\{a\mapsto a, b\mapsto b, s\mapsto s, d\mapsto d\})
\end{array}$$
\caption{Shape Analysis for the Simple Example Protocol}\label{fig:blanchet's shape analysis}
\end{figure}

The protocol was constructed with a known flaw for expository
purposes, and as a result the secret is exposed due to an
authentication failure.
The desired authentication goal is:
$$\begin{array}{l}
\all{(\Theta,\to)\in\run_{\pt(k_0)}, a,b\typ\srt{A}, s\typ\srt{S},
  d\typ\srt{D}, z\in\sdom(\Theta)}\\
\quad\fn{htin}(\Theta,z,2,resp(a,b,s,d))\wedge
\cn{non}(\Theta,\invk{a})\wedge\cn{non}(\Theta,\invk{b})\\
\qquad\supset\some{a_0\typ\srt{A}, s_0\typ\srt{S}, d_0\typ\srt{D},
  z_0\in\sdom(\Theta)}\fn{htin}(\Theta,z_0,1,init(a_0,b,s_0,d_0))
\end{array}$$
that is, when the responder~($B$) runs to completion, there is an
initiator~($A$) that is using~$b$ for the encryption of its initial
message.

\begin{figure}
$$\begin{array}{l}
\all{(\Theta,\to)\in\run_{\pt(k_0)}, a_0,b_0\typ
\srt{A}, s_0\typ\srt{S}, d_0\typ\srt{D}, z_0\in\sdom(\Theta)}\\
\quad\fn{htin}(\Theta,z_0,2,\resp(a_0,b_0,s_0,d_0))\wedge
\fn{non}(\Theta,\invk{a_0})\wedge\fn{non}(\Theta,\invk{b_0})\\
\quad\Longleftrightarrow\\
\quad\some{a_1,b_1,b_2\typ\srt{A}, s_1\typ\srt{S},
  d_1,d_2\typ\srt{D}, z_1,z_2\in\sdom(\Theta)}\\
\qquad z_0=z_1\wedge a_0=a_1\wedge b_0=b_1\wedge s_0=s_1\wedge d_0=d_1\wedge{}\\
\qquad\fn{htin}(\Theta,z_1,2,\resp(a_1,b_1,s_1,d_1))\wedge{}\\
\qquad\fn{htin}(\Theta,z_2,1,\init(a_1,b_2,s_1,d_2))\wedge{}\\
\qquad(z_2,0)\prec_{(\Theta,\to)}(z_1,0)\wedge\fn{uniq}(\Theta,s_1,(z_2,0))\wedge{}\\
\qquad\fn{non}(\Theta,\invk{a_1})\wedge
\fn{non}(\Theta,\invk{b_1})
\end{array}$$
\caption{Shape Analysis Sentence for the Simple Example Protocol}\label{fig:blanchet's shape analysis sentence}
\end{figure}

To investigate this goal, we ask {\cpsa} to find out what other
regular behaviors must occur when a responder runs to completion by
giving {\cpsa} skeleton~$k_0$ in Figure~\ref{fig:blanchet's shape
  analysis}.  {\cpsa} produces shape~$k_1$ that shows that an
initiator must run, but it need not use the same key to encrypt its
first message.  The shape analysis sentence for this scenario is
displayed in Figure~\ref{fig:blanchet's shape analysis sentence}.
Needless to say, the authentication goal cannot be deduced from this
sentence due to the man-in-the-middle attack discussed earlier.
However, one can prove the security goal is false by
constructing a bundle that contains the man-in-the-middle attack
specified with the help of adversarial stands, and using it as a
counterexample to the security goal.  If one repeats the
analysis using the protocol in Eq.~\ref{eq:amended protocol}, the
generated shape analysis sentence can be used to deduce the
authentication goal.

\section{Discussion}\label{sec:discussion}

Theorems~\ref{thm:skeleton formula} and~\ref{thm:sas} correspond to
theorems with the same numbers in~\cite{Ramsdell12}.  There are
several key differences between the two works.  Higher-order logic is
used for shape analysis sentences here, but~\cite{Ramsdell12} follows
the first-order logic, model theoretic approach set out
in~\cite{Guttman09}.  A first-order formulation of this version of
shape analysis sentences is straightforward, but would obscure their
use in {\pvs}.

The second difference is this work uses bundles over initial algebras
for models, whereas the previous works use skeletons over free
algebras.  The shallow embedding of strand space theory in {\pvs}
motivates this choice.

Finally, this work faithfully captures the semantics of the roles of
the protocol being analyzed via the height-instance predicate
\fn{htin}, which is defined using roles as sets of role items.  In
previous works, a role origination assumption was ignored.

\section{Conclusion}\label{sec:conclusion}

This paper presented a method for importing security goals and the
results of a {\cpsa} analysis into {\pvs} such that proofs about the
goals can rely on a detailed theory of strand spaces.  The method uses
a shallow embedding of the theory within {\pvs}.  To enable the
embedding, the concept of roles as sets of role items was introduced.
As a result, there is no need to explicitly represent substitutions,
homomorphisms, and skeletons within {\pvs} to prove security goals.
Instead, shape analysis sentences perform the task of transporting
results from {\cpsa} into~{\pvs}.

\section*{Acknowledgment}

Paul D.\ Rowe and Joshua D.\ Guttman provided valuable feedback on
this paper.  I thank Ed Zieglar for his support.

\bibliography{sasbundles}
\bibliographystyle{plain}

\appendix

\section{Role Annotations}\label{sec:role annotations}

There is a simple extension to the strand space theory in
Section~\ref{sec:strand spaces} that allows the ability to annotate an
event in a role with an object of any type~$\type$.  In practice, few
events in a role need annotation, so for type~\type, events are
associated with the type $\fn{lift}(\type)$.  A \index{lifted
  type}lifted type has two constructors and one accessor, so
$x\in\fn{lift}(\type)$ implies that $x=\bot$ or $x=\up y$ for some
$y\typ\type$.  If $x=\up y$ then $y=\down x$.

Annotations were added by modifying the definition of a role item to
be of the form $\role(C,N,U,A)$, where $C$, $N$, and $U$ are as
before, $A\in\fn{lift}(\type)^+$, and the length of $A$ is the same as
the length of~$C$.  Let role assignment~$\rl$ demonstrate that
bundle~$\Upsilon$ is a run of some protocol.  Node~$n=(s,i)$
in~$\Upsilon=(\Theta,\to)$ is \index{annotations}\emph{annotated} with
$a\in\type$, written $\fn{anno}(\Upsilon,\rl,n,a)$ if
$$\begin{array}{l}
\some{\role(C,N,U,A)\in\rl(s)}\\
\quad\fn{inst}(\Theta,s,\role(C,N,U,A))\wedge A(i)=\up a
\end{array}$$
The set of annotated nodes is
$$\fn{anode}((\Theta,\to),\rl)=\{n\in\fn{nodes}(\Theta)\mid
\some{a\typ\type}\fn{anno}((\Theta,\to),\rl,n,a)\}$$

The annotations can be used to enrich the specification of security
goals.  For example, annotations can be used to combine \index{trust
  management}trust management theories with cryptographic
protocols~\cite{GuttmanEtAl04}.  In this use case, events are
annotated with formulas from a trust management logic.  A formula on
an outbound event is a guarantee and the sender must show the formula
is true before sending the message.  A formula on an inbound event is
an assumption that can be used by the receiver to deduce future
guarantees.  The bundle-based strand space theory can be used to
ensure that whenever a receiver relies on a formula, another principle
has previously guaranteed it.

Role annotations can also be used to reason about state-based
protocols.  The state in the protocol is modeled as a set of states
and a \index{transition relation}transition relation~$\tran$.  An
infinite sequence of states~$\pth$ is a \index{path}\emph{path} if
$\all{i\in\nat}(\pth(i),\pth(i+1))\in\tran$.  To use role annotations
to reason about state, events in roles are annotated with subsets of
the transition relation, that is $\type=\pow{\tau}$.  The art to
making effective use of a state agnostic protocol analyzer is to
modify the message-passing part of the protocol so that a
representation of state is threaded through an execution via
receive-send pairs of strand succession nodes, where the transmitting
node is annotated with a set of transitions consistent with the
threaded state.

A bundle $\Upsilon$ is \index{compatible
  bundle}\emph{compatible}~\cite[Def.~11]{Guttman12} with a
state-based role assignment~{\rl} if there exists $\ell\in\nat$,
$f\in\fn{anode}(\Upsilon,\rl)\to\{0,1,\ldots,\ell-1\}$, and
$\pi\in\fn{path}$ such that
\begin{enumerate}
\item $f$ is bijective,
\item $\all{n_0,n_1\in\fn{anode}(\Upsilon,\rl)}
n_0\prec n_1\Longleftrightarrow f(n_0)<f(n_1)$, and
\item $\all{n\in\fn{anode}(\Upsilon,\rl),a\in\pow{\tau}}$\\
$\mbox{}\quad\fn{anno}(\Upsilon,\rl,n,a)\supset
(\pi(f(n)),\pi(f(n)+1))\in a$.
\end{enumerate}
This definition ties together the state and message-passing worlds and
allows for the verification of state sensitive security goals.  An
in-depth paper describing this technique by Dan Dougherty, Joshua
Guttman, Paul Rowe, and this author is forthcoming.

This appendix ends with a simple example of a stateful protocol called
the Award Card Protocol ({\acp}) created by Joshua Guttman and the
author.  The state in this protocol is a card with some boxes.  When
the card is issued, no box is checked.  Each time a buyer purchases an
item, the cashier checks one box.  The buyer may redeem the card when
all boxes are checked.  It is assumed that a buyer possesses no more
than one card at any time.

For simplicity, suppose every card has just one box and there are two
interactions with cashiers.  Annotated nodes can be used to prove the
two interactions are totally ordered and there must have been a new
card issued between the cashier interactions.  A sketch of the proof
follows.  The model of state is described first, next the protocol
roles, then the method by which the lemma in the state model is
imported into the strand space world, and finally, the use of a shape
analysis sentence to finish the proof of the security goal.

The model of state is not restricted to a card with one box.
Let~\fn{bx} be the number of boxes on a card.  Each
state~$s\in\upsilon$ is the number of unchecked boxes.  The transition
relation is $\tau=\{(s_0,s_1)\mid s_0=s_1+1\lor s_1=\fn{bx}\}$, that
is one box can be checked, or a new card can be issued when one is
redeemed or lost.  The following lemma can be proved by induction.
\begin{lem}[Check or Issue]\label{lem:check or issue}\index{Check or
    Issue Lemma}
$$\begin{array}{l}
\all{\pi\in\fn{path},i,k\in\nat}\\
\quad i\leq k\supset{}\\
\qquad \pi(i)\geq \pi(k)\lor{}\\
\qquad \some{j\in\nat}
i<j\land j\leq k\land \pi(j)=\fn{bx}
\end{array}$$
\end{lem}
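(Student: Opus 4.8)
The plan is to prove the lemma by a straightforward induction on $k$, keeping the quantifiers over $\pi$ and $i$ inside the induction hypothesis (equivalently, one may induct on the gap $k-i$). First I would unwind the definitions: a path $\pi$ satisfies $(\pi(m),\pi(m+1))\in\tau$ for every $m$, and by the definition $\tau=\{(s_0,s_1)\mid s_0=s_1+1\lor s_1=\fn{bx}\}$ this means that at each step either $\pi(m)=\pi(m+1)+1$ (one box is checked) or $\pi(m+1)=\fn{bx}$ (a fresh card is issued). So the lemma says that between index $i$ and index $k$ the path is non-increasing unless a new card is issued at some index strictly after $i$ and no later than $k$.

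For the base case $k=0$ — and more generally whenever $i=k$ — the hypothesis $i\le k$ forces $i=k$, so $\pi(i)\ge\pi(k)$ holds with equality and the left disjunct is established. For the inductive step, assume the claim holds for $k$ and take $i\le k+1$. If $i=k+1$ the left disjunct again holds trivially, so suppose $i\le k$ and apply the induction hypothesis to $\pi$, $i$, and $k$. Either there is already a $j$ with $i<j\le k$ and $\pi(j)=\fn{bx}$, in which case the same $j$ witnesses the right disjunct for $k+1$ since $j\le k<k+1$; or $\pi(i)\ge\pi(k)$, and then I inspect the transition $(\pi(k),\pi(k+1))\in\tau$. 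If $\pi(k)=\pi(k+1)+1$ then $\pi(k+1)\le\pi(k)\le\pi(i)$, giving the left disjunct; if instead $\pi(k+1)=\fn{bx}$, then $j=k+1$ satisfies $i<k+1$, $k+1\le k+1$, and $\pi(k+1)=\fn{bx}$, giving the right disjunct.

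There is no real obstacle here; the only thing requiring care is the bookkeeping in the inductive step — an "issue" witness produced by the induction hypothesis must be propagated forward, and when no such witness exists the single transition at step $k$ is what decides between the two cases. The argument uses nothing about $\fn{bx}$ beyond its role in the second disjunct of $\tau$, and it never needs the states to be bounded, since decrement-by-one already yields $\pi(k+1)\le\pi(k)$ in $\nat$. (An essentially equivalent phrasing proves the contrapositive of the left disjunct: if $\pi(i)<\pi(k)$ and no $j$ with $i<j\le k$ has $\pi(j)=\fn{bx}$, then every step from $i$ to $k$ is a decrement, so $\pi$ is non-increasing on that range, a contradiction.)
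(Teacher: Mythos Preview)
Your proof is correct and follows exactly the approach the paper indicates: the paper states only that ``the following lemma can be proved by induction'' and gives no further detail, and your induction on $k$ (with the case split on the transition at step $k$) is precisely the expected argument.
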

In words, either a card has less checked boxes than a predecessor or
there must have been a new card transition in between.

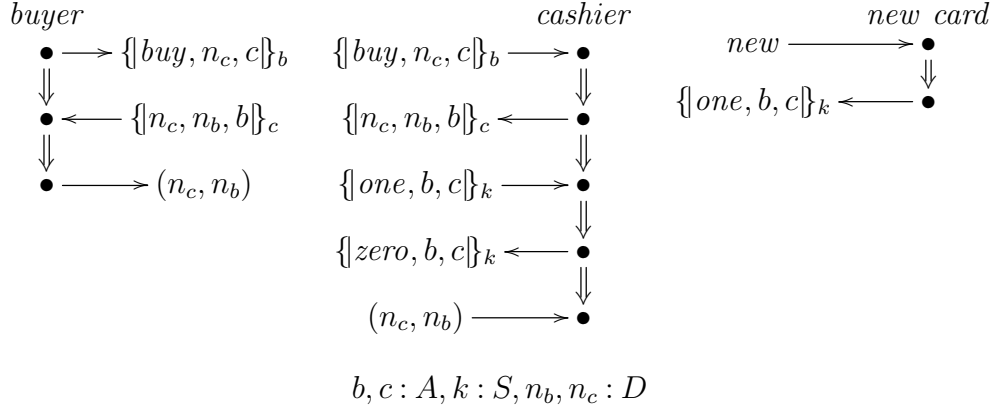
\begin{figure}
$$\begin{array}{rll}
\xymatrix@=.6em{\raisebox{-1ex}[0ex][0ex]{\strut\fn{buyer}}\\
  \bullet\ar@{=>}[d]\ar[r]&\enc{\mathit{buy},n_c,c}{b}\\
  \bullet\ar@{=>}[d]&\enc{n_c,n_b,b}{c}\ar[l]\\
  \bullet\ar[r]&(n_c,n_b)}&
\xymatrix@=.6em{&\raisebox{-1ex}[0ex][0ex]{\strut\fn{cashier}}\\
\enc{\mathit{buy},n_c,c}{b}\ar[r]&\bullet\ar@{=>}[d]\\
\enc{n_c,n_b,b}{c}&\bullet\ar@{=>}[d]\ar[l]\\
\enc{\mathit{one},b,c}{k}\ar[r]&\bullet\ar@{=>}[d]\\
\enc{\mathit{zero},b,c}{k}&\bullet\ar@{=>}[d]\ar[l]\\
(n_c,n_b)\ar[r]&\bullet}&
\xymatrix@=.6em{&\raisebox{-1ex}[0ex][0ex]{\strut\fn{new\ card}}\\
\fn{new}\ar[r]&\bullet\ar@{=>}[d]\\
\enc{\mathit{one},b,c}{k}&\bullet\ar[l]}
\end{array}$$
$$b,c\typ A, k\typ S, n_b,n_c\typ D$$
\caption{Award Card Protocol Traces}\label{fig:acp traces}
\end{figure}

The Award Card Protocol requires an addition to the signature in
Figure~\ref{fig:signature}---an infinite set of constants~$\cn{g}_i$
of sort~$\top$ called \index{tags}tags.  This protocol uses four tags,
$\fn{zero}=\cn{g}_0$, $\fn{one}=\cn{g}_1$, $\fn{buy}=\cn{g}_2$, and
$\fn{new}=\cn{g}_3$.

There are three roles in the {\acp}, a new card issuer, a cashier, and
a buyer.  The trace of each role is displayed in Figure~\ref{fig:acp
  traces}.

An interaction between a cashier and a buyer is authenticated using a
Needham-Schroeder-Lowe inspired message pattern.  Ignore the third and
fourth event in the cashier role to see the pattern.

The remainder of the events in the roles encode the state, most using
the encoding produced by the injective function~$g(s)=\cn{g}_s$.  The
third and fourth event in the cashier role encode a box checking
transition.  The first and second event in the new card role encode a
new card transition, where the first event is a dummy value due to the
special form of a new card transition.

In general, state encoding message events are inbound followed by
outbound event pairs.  The outbound event of the pair is annotated.
If~$i$ is the index of the outbound event of trace~$C$, then it is
annotated with $\{(s_0,s_1)\mid g(s_0)=h(C(i-1))\land
g(s_1)=h(C(i))\}$, where~$h$ extracts the portion of the message from
an event that encodes the state.  In the special case of events of the
form of a new card transition, the outbound event is annotated with
$\{(s_0,s_1)\mid g(s_1)=h(C(i))\}$.


\begin{figure}
\begin{lem}[Bridge]\label{lem:bridge}\index{Bridge Lemma}
$$\begin{array}{l}
\all{\Upsilon,\rl}
\fn{compatible}(\Upsilon,\rl)\supset{}\\
\quad\all{n_0,n_1\in\fn{anode}(\Upsilon,\rl),a_0,a_1\in\pow{\tau},
s_0,s_1\in\upsilon}\\
\qquad\fn{anno}(\Upsilon,\rl,n_0,a_0)\land
\fn{anno}(\Upsilon,\rl,n_1,a_1)\land n_0\prec n_1\land{}\\
\qquad a_0\subseteq\{(s_2,s_3)\mid s_3=s_0\}\land
a_1\subseteq\{(s_2,s_3)\mid s_2=s_1\}\supset{}\\
\qquad\quad s_0\geq s_1\lor{}\\
\qquad\quad\some{n\in\fn{anode}(\Upsilon,\rl)}\\
\qquad\qquad\fn{anno}(\Upsilon,\rl,n,\{(s_2,s_3)\mid
s_3=\fn{bx}\})\land{}\\
\qquad\qquad n_0\prec n\land n\prec n_1
\end{array}$$
\end{lem}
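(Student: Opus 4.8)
The plan is to transport the statement through the compatibility relation into the state model and then apply the Check or Issue Lemma (Lemma~\ref{lem:check or issue}). First I would unpack $\fn{compatible}(\Upsilon,\rl)$ to obtain $\ell\in\nat$, the bijection $f\colon\fn{anode}(\Upsilon,\rl)\to\{0,\dots,\ell-1\}$, and the path $\pi\in\fn{path}$ satisfying the three clauses of the definition. Clause~3 applied to $n_0$ with $a_0$ gives $(\pi(f(n_0)),\pi(f(n_0)+1))\in a_0$; with the hypothesis $a_0\subseteq\{(s_2,s_3)\mid s_3=s_0\}$ this forces $\pi(f(n_0)+1)=s_0$. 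Symmetrically, clause~3 applied to $n_1$ with $a_1$, combined with $a_1\subseteq\{(s_2,s_3)\mid s_2=s_1\}$, gives $\pi(f(n_1))=s_1$. Clause~2 applied to $n_0\prec n_1$ gives $f(n_0)<f(n_1)$, hence $f(n_0)+1\le f(n_1)$.

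Next, setting $i=f(n_0)+1$ and $k=f(n_1)$, we have $i\le k$ in $\nat$ with $\pi(i)=s_0$ and $\pi(k)=s_1$, so Lemma~\ref{lem:check or issue} applies. If its first alternative holds, $\pi(i)\ge\pi(k)$ is exactly $s_0\ge s_1$, the first disjunct of the conclusion. Otherwise we obtain $j\in\nat$ with $i<j\le k$ and $\pi(j)=\fn{bx}$. From $i<j\le k$ we get $f(n_0)<j-1<f(n_1)$, so $j-1\in\{0,\dots,\ell-1\}$ and $n:=f^{-1}(j-1)\in\fn{anode}(\Upsilon,\rl)$ is well defined; clause~2 yields $n_0\prec n$ (from $f(n_0)<j-1$) and $n\prec n_1$ (from $j-1<f(n_1)$). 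Since $n\in\fn{anode}(\Upsilon,\rl)$, there is an annotation $a$ with $\fn{anno}(\Upsilon,\rl,n,a)$, and clause~3 gives $(\pi(j-1),\pi(j))=(\pi(j-1),\fn{bx})\in a$.

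It then remains to identify $a$ as $\{(s_2,s_3)\mid s_3=\fn{bx}\}$, and this is where the concrete form of the role annotations in Figure~\ref{fig:acp traces} is used: a box-checking annotation contains only transitions $(s,s-1)$ whose target $s-1$ is strictly below $\fn{bx}$, since $s\in\upsilon=\{0,\dots,\fn{bx}\}$, so such an annotation cannot contain a transition whose target is $\fn{bx}$; the only remaining annotation that can hold at $n$ is the new-card annotation $\{(s_2,s_3)\mid s_3=\fn{bx}\}$, and because $(\pi(j-1),\fn{bx})\in a$ we conclude $a$ equals it. Together with $n_0\prec n\prec n_1$ this establishes the second disjunct. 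The index bookkeeping in the first two steps is routine; the one genuinely protocol-specific step, and the place where care is needed, is this last one---observing that a step of the path landing on $\fn{bx}$ can be contributed only by a node carrying the new-card annotation.
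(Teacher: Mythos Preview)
Your approach is exactly the one the paper indicates: transport the hypotheses through the compatibility data $(\ell,f,\pi)$, invoke Lemma~\ref{lem:check or issue} on the path, and pull the resulting index $j$ back to an annotated node via the bijection $f$. The paper does not spell out the argument beyond remarking that ``the proof of the Bridge Lemma makes use of every part of the definition of compatibility,'' and your write-up does precisely that---clause~1 for $f^{-1}(j-1)$, clause~2 in both directions for the $\prec$/$<$ correspondence, and clause~3 to pin down $\pi(f(n_0)+1)=s_0$, $\pi(f(n_1))=s_1$, and $(\pi(j-1),\fn{bx})\in a$. Your index arithmetic ($i=f(n_0)+1$, $k=f(n_1)$, hence $f(n_0)<j-1<f(n_1)$) is correct.

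You are also right to flag the final identification of $a$ with $\{(s_2,s_3)\mid s_3=\fn{bx}\}$ as the one step that is not purely a consequence of compatibility: it requires knowing that in this protocol the only annotation that can contain a transition landing on $\fn{bx}$ is the new-card annotation. The lemma as displayed quantifies over all $\rl$, but it lives in the {\acp} context (where the state model, $\tau$, and the role annotations of Figure~\ref{fig:acp traces} are fixed), so this step is legitimate there; it would fail, as you implicitly note, for an arbitrary protocol with arbitrary annotations.
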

\end{figure}

When analyzing the {\acp}, {\cpsa} has no means by which to enforce
the linear ordering of state encoding nodes in bundles, and it may
produce a shape analysis sentence that is incompatible with our notion
of state.  To verify state aware security goals, we will restrict our
attention to the bundles that are compatible with the role assignment
implied by the role definitions.  Because function~$f$ in the
definition of compatibility is a bijection, annotated nodes in
compatible bundles must be linearly ordered.

The compatible bundle assumption allows one to infer the existence of
nodes that are not revealed by {\cpsa}.  In the case of the {\acp},
this is done by importing the Check or Issue Lemma into the strand space
world by proving the Bridge Lemma (Lemma~\ref{lem:bridge}).
The proof of the Bridge Lemma makes use of every part of the
definition of compatibility.

The implication in the Check or Issue Lemma corresponds to the second
implication in the Bridge Lemma.  The correspondence of the
conclusions of each implication is straightforward, however, the
hypothesis of the Bridge Lemma is much more complicated than the one
in the Check or Issue Lemma.  Yet all it is saying is that the
beginning and ending states over the range of the path are~$s_0$
and~$s_1$, where as in the Check or Issue Lemma, those states are
simply referred to by~$\pi(i)$ and~$\pi(k)$.

Dear reader, at this point I promised to describe the use of a shape
analysis sentence to complete the proof of the security goal.  I
fibbed.  This example is so simple and contrived, there is no need to
run {\cpsa} at all!  The fact that when there are two interactions
with cashiers, there must have been a new card issued between the
cashier interactions follows from the point-of-view skeleton one would
use to analyze this security goal.  In this respect, this is a very
unusual example.

The above procedure for verifying security goals of protocols with
state has been successfully applied to the Envelope
Protocol~\cite{ables2010escrowed}.  In this case, two shape analysis
sentences are required to prove the most interesting security goal.
The {\pvs} proof is detailed and involved, and relies on fundamental
properties of bundles.

For example, it was shown in {\pvs} that if node~$n_0$ is before some
transmission node~$n_2$, then either the nodes are on the same strand
or there is a reception node~$n_1$ before~$n_2$ on the same strand,
such that $n_0$ is before~$n_1$.  The compatibility assumption implies
a total ordering among transmission nodes with annotations.  The above
lemma is used to infer the correct ordering of nodes that receive
state encoding messages.  The lemma is also used in the proof of the
{\acp} security goal.

The proof of the Envelope Protocol security goal will be described in
the forthcoming paper mentioned earlier.

\printindex

\end{document}